\def\argmin{\mathop{\hbox{argmin}}\limits}
\def\argmax{\mathop{\hbox{argmax}}\limits}
\def\x{{\mathbf x}}
\def\y{{\mathbf y}}
\def\z{{\mathbf z}}
\def\a{{\mathbf a}}
\def\b{{\mathbf b}}
\def\floor#1{{\left\lfloor\,#1\,\right\rfloor}}
\def\ceil#1{{\left\lceil\,#1\,\right\rceil}}
\newcommand{\sNorm }[1]{\mbox{}\|#1\|  }
\newcommand{\OsNorm }[1]{\mbox{}\|#1\|_{\ell_1}  }
\newcommand{\FsNorm }[1]{\mbox{}\|#1\|_F  }
\newcommand{\TsNorm}[1]{\mbox{}\|#1\|_2}
\newcommand{\MaxNorm }[1]{\mbox{}\|#1\|_{\max}  }
\newcommand{\setlinespacing}[1]%
           {\setlength{\baselineskip}{#1 \defbaselineskip}}
\newcommand{\abs }[1]{\left|#1\right|}
\newcommand{\sabs }[1]{|#1|}
\newtheorem{definition}{Definition}
\newtheorem{lemma}{Lemma}
\newtheorem{theorem}{Theorem}
\newtheorem{corollary}{Corollary}
\newcommand{\mat}[1]{{\ensuremath{\bm{\mathrm{#1}}}}}
\def\a{{\bm \alpha}}
\def\w{{\mathbf w}}
\def\b{{\mathbf b}}
\def\x{{\mathbf x}}
\def\y{{\mathbf y}}
\def\z{{\mathbf z}}
\def\matA{\mat{A}}
\def\matB{\mat{B}}
\def\matD{\mat{D}}
\def\matM{\mat{M}}
\def\matR{\mat{R}}
\def\matU{\mat{U}}
\def\matV{\mat{V}}
\def\matW{\mat{W}}
\def\matX{\mat{X}}
\def\matY{\mat{Y}}
\begin{document}
\title{Ternary Residual Networks}
\author[1]{Abhisek Kundu
}
\author[1]{Kunal Banerjee}
\author[1]{Naveen Mellempudi}
\author[1]{Dheevatsa Mudigere}
\author[1]{Dipankar Das}
\author[1]{Bharat Kaul}
\author[2]{Pradeep Dubey}
\affil[1]{Parallel Computing Lab, Intel Labs, Bangalore, India}
\affil[2]{Parallel Computing Lab, Intel Labs, Santa Clara, CA, USA}
\affil[ ]
{\textit{\{abhisek.kundu, kunal.banerjee, naveen.k.mellempudi, dheevatsa.mudigere, dipankar.das, bharat.kaul, pradeep.dubey\}@intel.com}}
\date{}
\maketitle
\begin{abstract}
Sub-8-bit representation of DNNs incur some discernible  loss of accuracy despite rigorous (re)training at low-precision. Such loss of accuracy essentially makes them equivalent to a much shallower counterpart, diminishing the power of being deep networks.  
To address this problem of accuracy drop we introduce the notion of \textit{residual networks} where we add more low-precision 
edges to sensitive branches of the sub-8-bit network to compensate for the lost accuracy. Further, we present a perturbation theory to identify such sensitive edges. 
Aided by such an elegant trade-off between accuracy and compute, the 8-2 model (8-bit activations, ternary weights), enhanced by ternary residual edges, turns out to be sophisticated enough to achieve very high accuracy ($\sim 1\%$ drop from our FP-32 baseline), despite $\sim 1.6\times$ reduction in model size, $\sim 26\times$ reduction in number of multiplications, and potentially $\sim 2\times$ power-performance gain comparing to 8-8 representation, on the state-of-the-art deep network ResNet-101 pre-trained on ImageNet dataset.  Moreover, depending on the varying accuracy requirements in a dynamic environment, the deployed low-precision model can be upgraded/downgraded on-the-fly by partially enabling/disabling residual connections. For example, disabling the least important residual connections in the above enhanced network, 
the accuracy drop is $\sim 2\%$ (from FP32), despite $\sim 1.9\times$ reduction in model size, $\sim 32\times$ reduction in number of multiplications, and potentially $\sim 2.3\times$ power-performance gain comparing to 8-8 representation.
Finally, all the ternary connections are sparse in nature, and the ternary residual conversion can be done in a resource-constraint setting with no low-precision (re)training. 
\end{abstract}

\section{Introduction}

Deep Neural Networks (DNNs) (AlexNet \cite{alexnet}, VGG \cite{vgg}, ResNet \cite{resnet}) achieved remarkable accuracy in many application domains, such as, image classification, object detection, semantic segmentation, speech recognition  (\cite{dnn_nature}).  
However, DNNs are 
notoriously resource intensive models in terms of amount of compute, memory bandwidth requirements, and consumption of power. 
Deploying trained DNNs to resource-constraint devices (mobile, cars, robots) to make billions of predictions every day, efficiently and accurately, with limited power budget is a considerably challenging problem.    
This motivates a compact and/or reduced-precision model of DNNs for both mobile devices and data  servers. 

There are two major approaches to such problems. One is to reduce the number of parameters in the network (e.g. finding a shallower/compact  representation) yet achieve similar accuracy as the deep network. Examples of such kind are SqueezeNet \cite{squeezenet}, MobileNet \cite{mobilenet}, and SEP-Nets \cite{sepnet}. These models are very small in size ($\sim$ 5 MB) and are typically targeted for mobile devices.
However, it is not surprising that their overall accuracy is very limited on complex dataset ImageNet \cite{imagenet_data}, e.g.,  SEP-Net Top-1 accuracy is 65.8\% which is $\sim$ 10\% off to that of ResNet-50. Deploying them on sensitive applications, such as autonomous cars and robots, might be impractical because these models might make too many mistakes (hence might be fatal as well). 
The other approach is concerned about the reduction in size of parameter representation (via compression or low-precision). Well-known methods of this kind are pruning  \cite{dnn_surgery,structured_sparsity,energy_pruning}, quantization \cite{logqunat,finn2016,gupta2015lp,inq,hubara2016qnn,NNFM2015}, binarization \cite{courbariaux2016bnn}, ternarization \cite{2016twn,zhu2016ttq,han2015learning,TWD2017,FGQ17}, hashing \cite{quant_hashing}, Huffman coding \cite{deep_compression} and others \cite{zhou2016dorefa,rastegariECCV16}.
However, despite smaller size of network representation, not all of these techniques may be friendly to efficient implementation on general purpose hardware (CPUs, GPUs) (e.g., \cite{deep_compression}). 
%
Additionally, the power consumption of DNNs 
depends mostly on the data movement of the feature maps and the number of multiply-and-accumulate (MAC), rather than model size. 
For example, convolution layers, despite having much smaller number of parameters comparing to FC layers, consume more power 
because of repeated use of convolution weights. 
Similarly, thinner and deeper networks might consume more power than shallower networks. For example, SqueezeNet \cite{squeezenet}, despite being $50\times$ smaller in size than AlexNet with comparable accuracy, consumes 33\% more power \cite{energy_pruning}. 

Here we are mainly focused on the trade-off between low-precision representation and accuracy of deeper networks, keeping an eye on the power-performance factors.  
There is a clear need for reduction in precision for both weights and activations (that are fetched from external memory or I/O devices)
for more efficient implementation of deeper networks.
Such low-precision representation for activation demands for specialized low-precision arithmetic \cite{williamson1991dynamically,williamson1985new,FGQ17} and hardware design.   
For example, Google's TPU \cite{tpuGblog} sets a trend for a low-precision inference pipeline (8-bit activations, 8-bit weights).
%
%
%
%
Moreover, significant research energy is being expended to explore sub-8-bit domain of DNN representation \cite{rastegariECCV16,zhou2016dorefa,hubara2016qnn,FGQ17}, where the interplay among model size, compute, power, and accuracy becomes more tricky. For 8-8 representation, despite reducing the model size by $4\times$, only minimal drop in accuracy has been observed for deeper networks. However, the accuracy degrades significantly in sub-8-bit domain where we reduce precisions for the weights and/or the activations. Low-precision (re)training is a popular approach to recover a part of the lost accuracy.

(Re)training at low-precision essentially re-parametrizes the DNNs to find another local optima in high-dimensional, non-convex search space of parameters. However, it is not clear if such low-precision solutions with similar generalization ability as FP-32 solutions exist, and also how to find them efficiently. In reality, sub-8-bit models for deep networks incur some noticeable drop in accuracy. This loss severely undermines the purpose of deploying a deep (sub-8-bit) network. For example, an 8-4 (4 bit weights) model, if produces $\sim 2\%$ drop on ResNet-101, might be equivalent to 8-8 model on much shallower ResNet-50 in terms of model size and accuracy, but might be worse in power-performance (for deep networks even a small gain in accuracy costs significant compute (Table \ref{table:resnet_summary})). 
\begin{table}[!t]
\centering
    \begin{tabular}{| c || c | c | c | c | c |}
   \hline
ResNets & 18 & 34 & 50 & 101 & 152 
\\ \hline \hline
Top-1 Err (\%) & 27.88 & 25.03 & 24.7 & 23.6 & 23
\\ \hline
FLOPs ($10^9\times$) & 1.8 & 3.6 & 3.8 & 7.6 & 11.3
\\ \hline
  \end{tabular}
\caption
{
Accuracy vs compute for FP-32 ResNet models on ImageNet 1K classification. For deeper models,  even a small gain in accuracy costs significant compute. 
}
\label{table:resnet_summary}
\end{table}
This weakens the motivation for sub-8-bit models of deep networks. 
%
%
We seek to answer: 
\begin{tcolorbox}
\textit{Can a sub-8-bit model achieve similar accuracy as FP-32 model with comparable or better model size and power-performance numbers comparing to 8-8}?
\end{tcolorbox}
Considering computational benefits of ternary 8-2 models, \cite{FGQ17} introduced a fine-grained quantization (FGQ) method that first partitions the weights into disjoint blocks, and then ternarizes them. The block size ($N$) controls the accuracy vs number of multiplications (and model size). Larger $N$ eliminates more multiplications, however, with a notable drop in accuracy (although they reported the best accuracy for sub-8-bit models on ImageNet).
%
Another limitation of existing models is that they cannot be set on a `power-saving' mode (say, via less MAC) when some further loss in accuracy is tolerable in a dynamic environment. Once deployed, existing models essentially operate in a `fixed-accuracy-fixed-power' mode.

To deal with the problems discussed above, we introduce the notion of \textit{residual edges} (especially) for sub-8-bit DNNs, where we add more sub-8-bit parameters to sensitive branches of the network to compensate for the lost accuracy. we propose a perturbation theory on the pre-trained DNNs to estimate the sensitivity of branches and the number of residual edges we need in order to achieve a given accuracy.
We apply this method on ternary 8-2 representation for ResNet-101 and AlexNet pre-trained on ImageNet. Guided by the theory and enhanced by the ternary residual edges, the ternary 8-2 representation turns out to be sophisticated enough to outclass the 8-8 model in terms of model size, number of multiplications, and power-performance gain, while  maintaining very high  accuracy. Moreover, such networks with residual edges can be upgraded/downgraded on-the-fly by partially enabling/disabling some of the residual edges, depending on the accuracy requirements in a dynamic environment. For example, when autonomous cars or robots are in a less eventful environment where less number of objects are involved, the classification problem becomes considerably  simpler (sufficient to distinguish among distinct objects, such as humans, dogs, vehicles, trees, etc. rather than discriminating among multiple breeds of dogs), and by disabling many edges we can downgrade the model in terms of compute, power, etc., yet maintain very high accuracy for those (less number of) classes.

Drawing an analogy between human attention and precision, and also an analogy between stress due to attention and power consumption, it is natural for us to be selectively attentive to certain tasks that requires processing more information. Such upgrade/downgrade of low-precision DNNs mimics a more real-world scenario that other existing models are unable to imitate. For example, both 8-2 ternary and 8-8 are always at fixed precision fixed power mode irrespective of the dynamic nature of the circumstances.  
Using such a downgrade operation of our residual network (for ResNet-101) we observe only $2\%$ drop in classification accuracy (from our FP-32 baseline) 
keeping only $2\times$ parameters of ternary 8-2 network, despite eliminating $\sim 32\times$  multiplications and achieving $\sim 2.3\times$ power-performance gain over 8-8. 
Finally, the conversion from FP-32 weights to 8-2 ternary residual model requires no low-precision (re)training and it can be performed in a resource-constraint environment. 
%
%

We organize the rest of the paper as follows. We interpret low-precision/sparse representation as adding  noise to pre-trained DNNs. For this, 
we first provide a perturbation analysis of pre-trained DNNs to figure out sensitivity of key quantities that contributes to the final error. Then, we introduce the notion of residual parameters which, when added to the perturbed network, reduces the noise and improves the accuracy. Specifically, we focus on ternary 8-2 models, and show that ternary residual networks can outperform 8-8 representation in terms of critical factors, such as, model size, number of multiplications, and power-performance, while maintaining very high accuracy. Finally, experiments on ResNet-101 and AlexNet for ImageNet classification problem corroborate our theoretical results. We start with summarizing the frequently-used notations below.
\\

$\bullet$ \textbf{Notations and Preliminaries: }
For a matrix $\matA\in\mathbb{R}^{m\times n}$, we denote the (element-wise) Frobenius norm as $\FsNorm{\matA} = \sqrt{\sum_{i,j=1}^{m,n}\sabs{\matA_{ij}}^2}$, $\ell_1$ norm as $\OsNorm{\matA}=\sum_{i,j=1}^{m,n}\sabs{\matA_{ij}}$, and max norm as $\MaxNorm{\matA} = \max_{i,j}\sabs{\matA_{ij}}$. We can easily generalize these definitions to higher order tensors. Also, we define the spectral/operator norm as $\TsNorm{\matA} = \sup\{\FsNorm{\matA\x}/\FsNorm{\x}:\x\in\mathbb{R}^n, \x\neq \textbf{0}\}$. For matrices $\matA$ and $\matB$, $\FsNorm{\matA\matB}\leq \min\{\FsNorm{\matA}\TsNorm{\matB},\TsNorm{\matA}\FsNorm{\matB}\}$. For vectors $\x$ and $\y$ of same dimension, inner product is defined as $\left<\x, \y\right>= \sum_i\x_i\cdot \y_i$. From Cauchy-Schwarz inequality, $\sabs{\left<\x, \y\right>}\leq \FsNorm{\x}\FsNorm{\y}$. 
Also, $\OsNorm{\matA}\leq \sqrt{mn} \cdot \FsNorm{\matA}$.

\section{Perturbation in a Locally-Optimal DNN} \label{sec:perturbation_bound}
We first provide an analysis on how the output of a pre-trained (locally optimal) DNN gets distorted, layer-by-layer, in presence of additive noise to inputs and/or parameters. We want to control some of the key components of the noise to put a limit on this overall perturbation such that it has little or no impact on the accuracy. 
%
We treat a DNN as a composition of parametric transformation functions $f_i$, and we can get (locally) optimal values for the parameters $\matW^{(i)}$ via network training (optimizing some parametric function defined over input space $\mathcal{D}_0$).   
Then, we can interpret quantization and/or sparsification as adding a noise to the (locally) optimal parameters $\matW^{(i)}$ to produce sub-optimal $\tilde\matW{}^{(i)}$. We want to quantify the effect of such sub-optimal $\tilde\matW{}^{(i)}$ on the final outcome, e.g., classification scores produced by the DNN. 
%
For this, let us assume that our DNN has $\ell$ layers and let $\y \in \mathbb{R}^d$ ($d$ is the number of classes) be the output vector of layer $\ell$ such that its $i$-th component $\y_i$ contains the classification score for $i$-th class, for a given input $\x$. 
Let $\hat\y \in \mathbb{R}^d$ denote the perturbed vector $\y$ due to added noise.
%
Here we are interested in top-1 accuracy, i.e., the largest component of $\y$ should remain the largest in $\hat\y$. Mathematically, let $i^*$ be the index for the correct class label for an input, and we define $i^*$ and $j^*$ as: 
$
i^* = \argmax_i \{\y_i\}
$, and 
$
j^* = \argmax_j \{\hat\y_j\}
$.  
Then, $i^* = j^*$ implies no loss of classification accuracy despite the perturbation in $\hat \y$. 
Note that,
$
\FsNorm{\y - \hat \y} \leq \delta 
\Rightarrow \max_i\abs{\y_i - \hat \y_i} \leq \delta,
$
for $\delta>0$,
i.e., $\FsNorm{\y - \hat \y} \leq \delta$ implies that no component of the original output vector 
can be altered by more than $\pm\delta$. 
For a correctly classified input $\x$, a misclassification occurs due to perturbation when $\hat\y_j >\hat\y_{i^*}$, for $j\neq i^*$. 
Let us assume that, due to perturbation, the true classification score gets reduced and some other score gets bigger, i.e.,  
$\hat\y_{i^*} = \y_{i^*}- \gamma\cdot\delta$,  and 
$\hat\y_{j} = \y_{j}+\sqrt{1-\gamma^2}\cdot \delta$, for $i^*\neq j$ and $0\leq \gamma \leq 1$, such that $\FsNorm{\y - \hat \y} \leq \delta$.
Then, such $\delta$-perturbation does not cause misclassification if
$\hat\y_{i^*} > \hat\y_{j}$, i.e., $\y_{i^*} - \y_j > \gamma\cdot \delta + \sqrt{1-\gamma^2}\cdot \delta \ge \delta$. 
In other words, 
as long as the true classification score is at least $\delta$ higher than any other score, then a $\delta$-perturbation has no adverse effect on classification accuracy.
We want to construct a $\hat\y$ (e.g., based on sparse and/or low-precision representation of weights/activations) such that $\FsNorm{\y-\hat\y} < \delta$, for a given tolerance $\delta$ and for any input $\x$. We can choose $\delta = \y_{i^*} - \y_j$ for this example. In reality, we can choose $\delta$ more judiciously from a distribution of such differences on training data. 
For  better interpretation and simplicity of the analysis we consider the relative perturbation as follows.
\begin{eqnarray*}
\frac{\FsNorm{\y-\hat\y}}{\FsNorm{\y}} 
\leq
\varepsilon
\leq
\frac{\FsNorm{\y-\hat\y}}{\sabs{\y_{i^*}}}
<
\frac{\delta}{\sabs{\y_{i^*}}}
=
\frac{\y_{i^*} - \y_j}{\sabs{\y_{i^*}}}
\end{eqnarray*}
We want to first derive an upper bound on 
${\FsNorm{\y-\hat\y}}/{\FsNorm{\y}}$ in terms of layer-wise parametric and/or non-parametric perturbation of the pre-trained network, and then we want to control such perturbations to keep the final accumulated  perturbation to be smaller than $\varepsilon$ (which can be chosen according to a distribution of such relative difference on training data).
We now define a set of functions used in a DNN. 
\begin{eqnarray}\label{eqn:f_dnn}
f_{dnn} = \{
\text{Convolution with Batch Normalization, 
Matrix Multiplication, 
ReLU, Pooling
}
\}
\end{eqnarray}
Functions in $f_{dnn}$ can be linear or non-linear, parametric or non-parametric, convex or non-convex, smooth or non-smooth. A pre-trained DNN is a fixed composition of functions in $f_{dnn}$ with locally optimal parameter values, i.e., DNN: $f=f_\ell f_{\ell-1}...f_1 : \mathcal{D}_0\rightarrow \mathcal{D}_\ell$, where each $f_i \in f_{dnn}$. 
%
More explicitly, let $f_i: \mathcal{D}_{i-1} \rightarrow \mathcal{D}_i$ with parameters $\matW^{(i)}$, and each $\mathcal{D}_i$ is  an arbitrary metric space where $\sNorm{\cdot}$ denotes a distance metric on set $\mathcal{D}_i$ (for simplicity, we focus on normed space only). 
For  all $\matX^{(0)} \in \mathcal{D}_0$, we define $\matX^{(i)} \in \mathcal{D}_i$ and $\tilde\matX{}^{(i)} \in \mathcal{D}_i$ as follows. For $i=1,...,\ell$,
$$
\matX^{(i)} = f_i(\matX^{(i-1)};\matW^{(i)}), 
\quad
\tilde\matX{}^{(i)} = f_i(\tilde\matX{}^{(i-1)};\tilde\matW{}^{(i)}),
$$
where $\tilde\matX{}^{(i)}$ and $\tilde\matW{}^{(i)}$ are perturbed versions of $\matX^{(i)}$ and $\matW^{(i)}$, respectively.
We want to measure how the outcome of $f$ gets perturbed in presence of perturbed inputs $\tilde\matX{}^{(i)}$ and perturbed parameters $\tilde\matW{}^{(i)}$. More specifically, we want to quantify the relative change in outcome of a given layer:
$
{\sNorm{\matX{}^{(i)}-\tilde\matX{}^{(i)}}}/{\sNorm{\matX{}^{(i)}}}
$. 
We note that input to a layer might be perturbed due to perturbation in earlier layers and/or perturbation in the present layer (e.g., activation quantization) before it is applied to the layer function. For this we use separate notations as follows. For $i$-th layer, let $\tilde\matX{}^{(i-1)}$ denote the perturbed input, $\hat\matX{}^{(i-1)}$ denote the perturbed activation, and $\tilde\matW{}^{(i)}$ denote the perturbed weights. 
Let us first define the following relative perturbations.
\begin{eqnarray}\label{eqn:epsilon}
\Delta_i =  
\frac{\FsNorm{\matX^{(i)}- \tilde\matX{}^{(i)}}}{\FsNorm{\matX^{(i)}}}
, 
\quad
\gamma_i = \frac{\FsNorm{\tilde\matX{}^{(i)}- \hat\matX{}^{(i)}}}{\FsNorm{\matX^{(i)}}}
, 
\varepsilon_i = 
\frac{\FsNorm{\matW^{(i)} - \tilde\matW{}^{(i)}}}{\FsNorm{\matW^{(i)}}}
\end{eqnarray}
We derive the following result to bound the relative change in the output of a layer using the definitions in (\ref{eqn:epsilon}).
\begin{theorem}\label{thm:layer_bound}
Using the above notations, the relative change in output of $i$-th layer of DNN, can be bounded as
\begin{eqnarray}\label{eqn:layer_bound}
\nonumber
\Delta_i \quad 
\leq
\quad
(\prod_{k=1}^iO(1+\varepsilon_k))\Delta_0
&+&
\sum_{k=1}^i(\prod_{j=k+1\leq i}^iO(1+\varepsilon_j))\text{ }O(\gamma_{k-1})
\\
&+&
\sum_{k=1}^i(\prod_{j=k+1\leq i}^i\text{  }O(1+\varepsilon_j))O(1+\gamma_{k-1})O(\varepsilon_k)
\end{eqnarray}
\end{theorem}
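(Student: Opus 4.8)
The plan is to establish the bound by a layer-by-layer recursion, exactly as in the (commented-out) proof of the abstract composition theorem, but now tracking the three distinct sources of noise — input perturbation propagated from earlier layers, activation quantization within the current layer, and weight perturbation within the current layer. First I would set up the single-step estimate: for layer $i$, decompose the error $\matX^{(i)} - \tilde\matX{}^{(i)}$ by inserting intermediate terms. Writing $\tilde\matX{}^{(i)} = f_i(\hat\matX{}^{(i-1)}; \tilde\matW{}^{(i)})$ where $\hat\matX{}^{(i-1)}$ is the quantized version of the (already perturbed) input $\tilde\matX{}^{(i-1)}$, I would use the triangle inequality to split
\begin{eqnarray*}
\FsNorm{\matX^{(i)} - \tilde\matX{}^{(i)}}
&\leq&
\FsNorm{f_i(\matX^{(i-1)};\matW^{(i)}) - f_i(\hat\matX{}^{(i-1)};\matW^{(i)})}
\\
&+&
\FsNorm{f_i(\hat\matX{}^{(i-1)};\matW^{(i)}) - f_i(\hat\matX{}^{(i-1)};\tilde\matW{}^{(i)})},
\end{eqnarray*}
and then split the first term further via $\FsNorm{\matX^{(i-1)} - \hat\matX{}^{(i-1)}} \leq \FsNorm{\matX^{(i-1)} - \tilde\matX{}^{(i-1)}} + \FsNorm{\tilde\matX{}^{(i-1)} - \hat\matX{}^{(i-1)}}$, i.e. $\Delta_{i-1}$ plus $\gamma_{i-1}$ after normalization.

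The second key step is to justify, for each function type in $f_{dnn}$ (convolution with batch normalization, matrix multiplication, ReLU, pooling), the two Lipschitz-type inequalities that control these two terms: an input-perturbation bound of the form $\FsNorm{f_i(\matX;\matW) - f_i(\matX';\matW)} \leq O(1)\cdot\TsNorm{\matW\text{-dependent factor}}\cdot\FsNorm{\matX - \matX'}$ normalized so that it reads $\Delta_i \lesssim O(1)(\Delta_{i-1} + \gamma_{i-1})$, and a weight-perturbation bound $\FsNorm{f_i(\matX;\matW) - f_i(\matX;\matW')} \leq O(1)\cdot\FsNorm{\matX}\cdot\FsNorm{\matW-\matW'}$ normalized to give an $O(1+\gamma_{i-1})\,O(\varepsilon_i)$ contribution (the $1+\gamma_{i-1}$ appearing because the weight perturbation acts on the quantized input whose norm is $\FsNorm{\matX^{(i-1)}}(1 + O(\gamma_{i-1}))$). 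For matrix multiplication and convolution these follow from submultiplicativity of the Frobenius/operator norms; for ReLU and pooling they follow from $1$-Lipschitzness (and these layers are non-parametric, so the weight term vanishes); batch normalization contributes bounded scaling factors absorbed into the $O(1)$ and $O(1+\varepsilon_i)$ constants. This is where the constants hidden in the $O(\cdot)$ notation get their meaning, so I would state a lemma collecting these per-layer facts and then treat the theorem as a purely formal consequence.

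The third step is the recursion itself. The single-step inequality has the schematic form
$$
\Delta_i \leq O(1+\varepsilon_i)\big(\Delta_{i-1} + O(\gamma_{i-1})\big) + O(1+\gamma_{i-1})\,O(\varepsilon_i),
$$
and unrolling it from $i$ down to $0$ — with the convention that an empty product equals $1$ — produces the stated three-term bound: the $\Delta_0$ term multiplied by $\prod_{k=1}^i O(1+\varepsilon_k)$, the activation-noise terms $\sum_{k=1}^i (\prod_{j=k+1}^i O(1+\varepsilon_j))\,O(\gamma_{k-1})$, and the weight-noise terms $\sum_{k=1}^i (\prod_{j=k+1}^i O(1+\varepsilon_j))\,O(1+\gamma_{k-1})\,O(\varepsilon_k)$. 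This unrolling is a routine induction on $i$: assume the bound for $i-1$, substitute into the single-step inequality, and collect terms, noting that multiplying the inductive hypothesis by $O(1+\varepsilon_i)$ shifts every product range to include $j=i$.

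The main obstacle is the second step — verifying the two Lipschitz inequalities uniformly across all layer types with constants that are genuinely $O(1)$ (independent of the input $\matX^{(0)}$ and of the perturbations), in particular handling batch normalization carefully, since its effective per-channel scale $\gamma/\sqrt{\sigma^2+\epsilon}$ must be bounded and its dependence on the (perturbed) running statistics controlled; once that lemma is in place, the rest is bookkeeping. A secondary subtlety is making precise that the normalizing denominators $\FsNorm{\matX^{(i)}}$ telescope correctly through the recursion — i.e. that $\FsNorm{\tilde\matX{}^{(i)}} = \FsNorm{\matX^{(i)}}(1 + O(\Delta_i))$ so that ratios taken against $\FsNorm{\matX^{(i)}}$ rather than $\FsNorm{\tilde\matX{}^{(i)}}$ introduce only higher-order corrections absorbable into the $O(\cdot)$ constants.
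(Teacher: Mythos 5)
Your proposal follows essentially the same route as the paper: a triangle-inequality decomposition of the single-step error into input-propagation, activation-quantization, and weight-perturbation terms, a per-layer-type Lipschitz lemma for each member of $f_{dnn}$ (with the normalization justified by assuming the locally optimal outputs are bounded away from zero), and an unrolling of the resulting recursion $\Delta_i \leq O(1+\varepsilon_i)\Delta_{i-1} + O(\gamma_{i-1}) + O(1+\gamma_{i-1})O(\varepsilon_i)$. The obstacle you flag — that the hidden constants must be input-independent, which forces a lower bound of the form $\FsNorm{\matX^{(i)}} \gtrsim \FsNorm{\matX^{(i-1)}}\FsNorm{\matW^{(i)}}$ — is exactly the assumption the paper makes, so your argument is correct and matches the paper's.
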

The theorem can be proved using triangle inequality, recursion in $\Delta_i$ in (\ref{eqn:epsilon}), and with the assumption that locally optimal parameters are not in a neighborhood of zero.
Theorem \ref{thm:layer_bound} gives us an upper bound on how much the output of $f_i$ changes by the perturbation of parameters, perturbation of activations, and perturbation of the domain of the composition ($\Delta_0$).
The result suggests that at $i$-th layer of DNN, perturbations of parameters and activations of all the previous stages accumulate nonlinearly in a weighted manner.
Moreover, we can see that perturbations in early layers accumulate more to create larger final perturbation, suggesting higher  sensitivity of earlier layers.
We want this perturbation to be small such that the perturbed solution remains in a neighborhood of the local optima (and generalize in a similar manner). 
We simplify the above bound for small noise.
\begin{theorem}\label{cor:small_perturbation}
Using the above notations, assuming $\Delta_0=0$ and $\FsNorm{\hat\matX{}^{(i-1)}}\leq \tau_{i-1}\FsNorm{\matX^{(i-1)}}$, where $\tau_{i-1}>0$ are constants, we derive the following for constants 
$c_{j}>0$. 
\begin{eqnarray}\label{eqn:final}
\Delta_i
&\leq&
\sum_{k=1}^i(\prod_{j=k+1\leq i}^ic_j)(O(\gamma_{k-1})+
O(\varepsilon_k))
\end{eqnarray}
\end{theorem}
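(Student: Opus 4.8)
The plan is to derive \eqref{eqn:final} directly from Theorem~\ref{thm:layer_bound} by specializing the general bound \eqref{eqn:layer_bound} to the small-noise regime. First I would set $\Delta_0 = 0$, which immediately kills the first term $(\prod_{k=1}^i O(1+\varepsilon_k))\Delta_0$ on the right-hand side of \eqref{eqn:layer_bound}. This leaves the two summation terms, one carrying the activation-perturbation factors $O(\gamma_{k-1})$ and the other carrying the product $O(1+\gamma_{k-1})O(\varepsilon_k)$ of activation and weight perturbations.

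Next I would absorb constants. The hypothesis $\FsNorm{\hat\matX{}^{(i-1)}}\leq \tau_{i-1}\FsNorm{\matX^{(i-1)}}$ says the perturbed activations stay within a bounded multiple of the clean activations; combined with the small-noise assumption (each $\gamma_k$ and $\varepsilon_k$ is $O(1)$, in fact small), each factor $O(1+\varepsilon_j)$ and each factor $O(1+\gamma_{k-1})$ is itself bounded by some absolute constant. So I would introduce constants $c_j > 0$ with $O(1+\varepsilon_j) \leq c_j$, and note that the bounded-activation hypothesis lets us also dominate the $O(1+\gamma_{k-1})$ prefactor appearing in the third term of \eqref{eqn:layer_bound} by a constant (folding it, along with the $\tau_{i-1}$'s, into the $c_j$'s or into the hidden constant of $O(\varepsilon_k)$). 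With these substitutions, the second term of \eqref{eqn:layer_bound} becomes $\sum_{k=1}^i(\prod_{j=k+1\leq i}^i c_j)\,O(\gamma_{k-1})$ and the third term becomes $\sum_{k=1}^i(\prod_{j=k+1\leq i}^i c_j)\,O(\varepsilon_k)$. Adding the two sums term by term and combining $O(\gamma_{k-1}) + O(\varepsilon_k)$ inside a single sum yields exactly
\begin{eqnarray*}
\Delta_i \leq \sum_{k=1}^i\Big(\prod_{j=k+1\leq i}^i c_j\Big)\big(O(\gamma_{k-1}) + O(\varepsilon_k)\big),
\end{eqnarray*}
which is \eqref{eqn:final}.

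The only genuinely delicate point — and the step I expect to be the main obstacle — is justifying that the $O(1+\gamma_{k-1})$ prefactor in the cross term of \eqref{eqn:layer_bound} can legitimately be replaced by a constant rather than left as a factor that could in principle compound across layers. This is precisely where the hypothesis $\FsNorm{\hat\matX{}^{(i-1)}}\leq \tau_{i-1}\FsNorm{\matX^{(i-1)}}$ does the work: it guarantees the perturbed activation magnitude never blows up relative to the clean one, so $1+\gamma_{k-1} = 1 + \FsNorm{\tilde\matX{}^{(k-1)} - \hat\matX{}^{(k-1)}}/\FsNorm{\matX^{(k-1)}}$ is uniformly bounded. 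I would make this explicit before collapsing the $O(\cdot)$ notation, so that the constants $c_j$ in the statement are seen to depend only on the $\tau$'s and on the (small) noise bounds, not on the input $\matX^{(0)}$. Everything else is bookkeeping: re-indexing the telescoping products $\prod_{j=k+1\leq i}^i$ and merging the two sums.
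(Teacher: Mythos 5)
Your proposal is correct and follows essentially the same route as the paper: the paper re-derives the recursion from Lemma~\ref{lemma:general_bound} using the hypothesis $\FsNorm{\hat\matX{}^{(i-1)}}\leq \tau_{i-1}\FsNorm{\matX^{(i-1)}}$ to reduce the weight-error term to $O(\varepsilon_i)$ and absorb the $O(1+\varepsilon_j)$, $O(1+\gamma_{k-1})$ factors into the constants $c_j$, then sets $\Delta_0=0$ — exactly the constant-absorption you describe starting from \eqref{eqn:layer_bound}. Your remark about why $1+\gamma_{k-1}$ can be treated as bounded is, if anything, more explicit than the paper's one-line ``simplifying the constants.''
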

The result suggests that layer-by-layer small amount of changes in weights and activations accumulate in a weighted manner. 
That is, keeping both $\gamma_i$ and  $\varepsilon_i$ small implies overall small perturbation, and as long as this is smaller than the relative gap between top score (for correct classification) and the next best score, there would be no loss in classification accuracy.
It is intuitive that larger the gap between the true classification score and the next best score more perturbation a DNN can tolerate.
Also, for earlier layers $\varepsilon_k$ should be kept much smaller than those in later layers to have an overall small perturbation. Our empirical evaluation on quantized DNNs (ResNet-101 and AlexNet) corroborates this theory. 
 %
\section{Low Precision DNN}
We interpret quantization, sparsification, low-precision, etc. as adding a noise to the locally optimal parameters of a DNN. Such noisy solutions, despite showing some degradation in accuracy, can be computationally beneficial. 
We want to choose a noise model carefully such that the noisy solution can be implemented efficiently on general purpose hardware.  
The focus here is to find a low-precision representation of DNNs that can benefit from low-precision operations while maintaining extremely high accuracy.

$\bullet$ \textbf{8-bit Activations and 8-bit Weights:} 
Constraining activations and weights to 8 bits 
appears to induce only small perturbation, resulting in typically $<1\%$ loss in accuracy. This can be explained using (\ref{eqn:final}) where we observe small $\varepsilon_k$ and $\gamma_k$ for each layer, such that, the final perturbation affects the relative difference between true classification scores and the next best scores minimally for the entire test set.

$\bullet$ \textbf{Sub-8-bit Representation:}
More challenging cases are sub-8-bit representation of DNNs, e.g., 8-bit activations and ternary weights. 
Note that the bound in (\ref{eqn:layer_bound}) suggests a non-linear degradation of accuracy in terms of two errors: error in activations and error in weights. That is, keeping both of them at very low precision implies a large amount of degradation of classification scores. This is likely  because the perturbed solution stays away from a neighborhood of the local optima. In such cases, we typically need to find another local optima via (re)training at low-precision \cite{FGQ17}. 
This new local optima need not be in a neighborhood of the old optima. In fact, there could be multiple optima mapping to similar accuracy  via re-parametrization \cite{sharp_minima}. However, it is not clear if low-precision solutions exist which can show very similar accuracy as the full-precision ones. Moreover, finding such solutions in a very-high dimensional, non-convex search space might be a non-trivial task. In reality, we often observe a noticeable drop in accuracy for such sub-8-bit solutions (despite rigorous (re)training at low precision). One possible explanation could be that these sub-8-bit models have limited model capacity (e.g., the number of unique values they can represent). 
We can interpret the earlier layers of a DNN as features generation stage, followed by feature synthesis, and finally the classification/discrimination phase. Lack of bits in early layers might severely constrain the quality of features generated, and consequently, more sophisticated features at later stages become coarse, degrading the quality of the network. This intuitive explanation is also consistent with the theoretical bound in (\ref{eqn:layer_bound}), where perturbation in earlier layers gets magnified more. It is natural to demand for more accuracy in low-precision representation (in robotics, autonomous cars, etc.), and the existing  methods may be insufficient to deal with this problem. 
There is a need to understand the trade-off between accuracy and precision in a theoretically consistent way. This paper attempts to  address such a case.


\subsection{Ternary Conversion of Pre-Trained Parameters}{\label{sec_ternaryconversion}}
Here we consider one specific sub-8 bit representation of DNN: 8-bit activations and ternary weights (8-2), where we want to decompose the full-precision trained weights $\matW$ to ternary values $\{-\alpha, 0, +\alpha\}$, $\alpha\geq 0$,  without re-training. We consider a simple threshold ($T>0$) based approach similar to \cite{2016twn,FGQ17}. Let $\hat\matW$ denote a ternary weight, such that, $i$-th element $\hat\matW_i = sign(\matW_i)$, if $\sabs{\matW_i}> T$, and $0$ otherwise.
Then, the element-wise error is  $E(\alpha, T)=\FsNorm{\matW-\alpha\hat\matW}^2$ and an optimal ternary representation $\alpha^*\hat\matW{}^*\approx \matW$ is as follows:
\begin{eqnarray}\label{eqn:twn}
\alpha^*, T^* = \argmin_{\alpha\geq 0, T>0}E(\alpha, T),
\text{ s.t. }
\alpha \ge 0, \hat\matW_i\in \{-1,0,+1\}
\end{eqnarray} 
for $i=1,2,...,n$, where $\matW \in \mathbb{R}^n$.
%
However, 
the weights may learn different types of features and 
may follow different distributions. Combining all the weights together might represent a mixture of various distributions, and a ternary representation for them,  
using a single threshold ($T$) and magnitude ($\alpha$), may not preserve the distributions of the weights.
%
To deal with this problem \cite{FGQ17} introduced FGQ by first dividing these weight tensors into disjoint blocks of sub-tensors of size $N$, and then ternarizing such blocks independently, i.e., decomposing $\matW$ into a disjoint group of $k$ filters $\{\matW^{(i)}\}, i=1,...,k$, and corresponding ternary weights $\alpha_i\hat\matW{}^{(i)}$, where  $\hat\matW{}^{(i)}_j\in \{-1,0,+1\}, \forall j$, solve $k$ independent sub-problems.
\begin{eqnarray}\label{eqn:main}
\alpha_1^*,..,\alpha_k^*,\hat\matW{}^{(1)}{}^*,..,\hat\matW{}^{(k)}{}^*
=
\sum_i
{\argmin_{\alpha_i,\hat\matW{}^{(i)}}}
\FsNorm{\matW^{(i)}-\alpha_i\hat\matW{}^{(i)}}^2
\end{eqnarray}
%
Denoting $I_T = \{i:\sabs{\matW_i}>T\}$, 
optimal solutions to individual sub-problems can be derived as
\begin{eqnarray}\label{eqn:symmetric}
\alpha^* 
= (\sum_{i\in I_T}\sabs{\matW_i})/\sabs{I_T}
,
T^* = \argmax_{T>0}(\sum_{i\in I_T}\sabs{\matW_i})^2/\sabs{I_T}
\end{eqnarray}
This leads to overall smaller layer-wise $\ell_2$ error; consequently, FGQ shows better accuracy using a smaller $N$. This improvement in accuracy is consistent with the theory in (\ref{eqn:final}). From model capacity point of view, with $k$ disjoint ternary FGQ blocks we can represent up to $2k+1$ distinct values, i.e., model capacity increases linearly with number of such blocks. However, smaller $N$, despite showing lower $\ell_2$ error, leads to larger number of (high-precision) multiplications (larger number of $\alpha$'s), and this might lead to less efficient implementation on general purpose hardware. 
%

For very deep networks, e.g., ResNet-101, we need significantly larger number of fine-grained parameters (synthesized from early layers) to improve the accuracy even by a small margin from its shallower counterparts (Table \ref{table:resnet_summary}). 
%
Sub-8-bit representation of sensitive parameters may have a `blurring' effect on later activations; consequently, extremely high accuracy results might be elusive in 8-2 model.

\subsection{Ternary Residual Edges} 
\label{sec: residual}

Motivated by achieving extremely high accuracy using sub-8-bit representation/operations, we introduce the notion of Residual Edges: when $\ell_2$ error between original weights and low-precision weights is high we need additional (sub-8) bits to compensate for the lost accuracy. That is, for sensitive branches of network we add more sub-8-bit edges to maintain the desired model capacity. This takes the final solution to a neighborhood of original solution.

Inferencing with parametric functions in $f_{dnn}$, such as convolution and matrix multiplication, can be expressed as a linear operation. For a given input $\x$, (partial) output can be expressed as $\y = \matW\x$, where $\matW$ are learned weights. Clearly, $\y = \tilde\matW\x+(\matW-\tilde\matW)\x$, where $\tilde\matW$ is some perturbed version of $\matW$. In our ternary setting, let $\tilde\matW = \alpha\hat\matW$, where $\alpha\hat\matW$ is a ternary representation of $\matW$, via Algorithm \ref{alg:ternary}. Let, the residual be $\Delta = \matW-\alpha\hat\matW$. For any given input if we accumulate the (partial) outputs of both the ternary weights and the residual weights, then we recover the original output. 
%
However, the residual $\Delta$ may not be low-precision. In order to have a uniform low-precision operation, such as 8-2, we need to approximate $\matW$ as a sequence of ternary residuals, such that, accumulating the output of all these intermediate steps gets us closer to the original output. Let, $\tilde\matW_0$, $\tilde\matW_1$, ..., $\tilde\matW_r$, be a sequence of ternary weights, where $\tilde\matW_0=\text{Ternary}(\matW)$, first step residual is $\Delta_1=\matW-\tilde\matW_0$,  $\tilde\matW_1=\text{Ternary}(\Delta_1)$, $\Delta_2=\matW-(\tilde\matW_0+\tilde\matW_1)$, ..., $\Delta_r = \matW-\sum_{i=0}^{r-1}\tilde\matW_i$,  $\tilde\matW_r=\text{Ternary}(\Delta_r)$. The (ternary) inference on input $\x$ is $\tilde\y=\sum_i\tilde\matW_i\otimes\x$, where $\otimes$ denotes the ternary multiplication. The goal is to ensure $\tilde\y \approx \y=\matW\x$.
%
Accumulation of such (ternary) residuals is guided by the perturbation theory presented here 
where we need to preserve the output of each layer in order to maintain a small perturbation in the final outcome.
Before we specify the steps of our ternary residual algorithm more formally, we need a more in-depth comparison with FGQ approach. 
%
\subsubsection{Comparison with FGQ Ternary 
}
We can represent only three distinct values: $-\alpha, 0, +\alpha$ with ternary weights. Both FGQ and our residual method increase model capacity, i.e., the number of distinct values that can be represented using them. With $k$ FGQ blocks we can have up to $2k+1$ distinct values, i.e., model capacity increases linearly with the number of blocks. However, this produces multiple scaling factors $\alpha$ that leads to larger number of multiplications (typically inefficient). On the other hand, with $r$ step ternary residual we can represent up to $3^{r+1}$ distinct values, that is, model capacity increases exponentially. However, residual approach results in an increased  model size (linear in $r$) as we need to store $r+1$ number of ternary weights. 
We can alleviate this problem by combining FGQ with residual ternary. That is, we can apply ternary residual for each ternary FGQ block.  
Moreover, not all the blocks are equally important, and we might need different number of residuals for different blocks. Let $i$-th block requires $r_i$ number of residuals to approximate it up to some desired accuracy. Then, there will be total $\sum_{i=1}^k(r_i+1)$ scaling factors, model capacity can be expressed as $\sum_i3^{(r_i+1)}-k+1$, and model size (in bits) is $(8+\frac{2n}{k})\sum_{i=1}^k(r_i+1)$. Table \ref{table:summary_params} summarizes the comparison of various ternary methods (we assume scaling factors $\alpha$'s are 8-bit each).

\begin{table*}[!t]
\centering
    \begin{tabular}{| c || c | c | c |}
   \hline
 & Model Size & Model Capacity 
 & \# Scaling Factors
\\ \hline \hline
Ternary & $8+2n$ & $3$ & $1$ \\ \hline
FGQ Ternary ($k$ blocks) & $8k+2n$
& $2k+1$ & $k$ \\ \hline
Ternary Residual ($r$ steps) & $(r+1)(8+2n)$ & $3^{r+1}$  & $r+1$ \\ \hline
FGQ + Ternary Residual & 
$(8+\frac{2n}{k})\sum_{i=1}^k(r_i+1)$
& $\sum_{i=1}^k3^{(r_i+1)}-k+1$ 
& $\sum_{i=1}^k (r_i+1)$ \\ \hline
  \end{tabular}
\caption
{
Comparison of various ternary methods, for a vector of length $n$, in terms of number of scaling factors (typically proportionate to number of multiplications), model capacity (number of distinct values that can be represented) and model size (number of bits). We assume that each scaling factor $\alpha$ is 8-bit.  
$r_i$ denotes the number of residuals used for the $i$-th block.
}
\label{table:summary_params}
\end{table*}
%
%
In (\ref{eqn:final}), we express the final perturbation in terms of $\ell_2$ norm of layer-wise perturbation. Here we extend it to block level perturbation as follows. Let $\matW^{(j)}$ be pre-trained weight  for $j$-th layer ($\tilde\matW{}^{(j)}$ be its perturbed version). Also, let $\matW^{(j)}$ is partitioned into $k$ disjoint blocks $\matW^{(j)}_{(i)}$ ($\tilde\matW{}^{(j)}_{(i)}$ be the perturbed version), $i=1,..., k$. 
Then, sensitivity $\varepsilon^{(j)}_i$ of $i$-th block in $j$-th layer is defined as follows.
\begin{eqnarray}\label{eqn:block_sensitivity}
\text{(Block Sensitivity)} 
\quad 
\varepsilon^{(j)}_i
=
{\FsNorm{\matW_{(i)}^{(j)}-\tilde\matW{}_{(i)}^{(j)}}}/{\FsNorm{\matW^{(j)}}}
\end{eqnarray}
We relate $\varepsilon^{(j)}_i$ with $\varepsilon_j$ (defined in (\ref{eqn:epsilon})) as follows.
$$
\sum_{i=1}^k (\varepsilon^{(j)}_i)^2
=
{\sum_{i=1}^k \FsNorm{\matW^{(j)}_{(i)}-\tilde\matW{}^{(j)}_{(i)}}^2}/{\FsNorm{\matW^{(j)}}^2}
= 
\varepsilon_j^2 
$$
%
(\ref{eqn:block_sensitivity}) suggests that for a given perturbation of weights of a layer, various blocks of weights may be perturbed differently. Consequently, we might need different number of residuals for different blocks to bound the total perturbation of a given layer. We present an incremental algorithm (Algorithm \ref{alg:residual}) where we add a ternary residual to the block that creates the largest error (even after residuals have been added to it). We repeat the process until the error for the layer is below certain desired tolerance.  
Here we give a proof that adding ternary residual blocks, as in Algorithm \ref{alg:residual}, strictly reduce the $\ell_2$ error at every step. 
\begin{theorem}\label{theorem:residual}
Let $\delta^{(i)}$ denote the $\delta$ computed at the $i$-th iteration in Algorithm \ref{alg:residual}. Then,
$\delta^{(i)} < \delta^{(i-1)}$, for all $i$.
\end{theorem}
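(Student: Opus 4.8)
The plan is to exploit the disjointness of the blocks together with the fact that a single ternarization step strictly shrinks the $\ell_2$ norm of any nonzero residual. Fix an iteration $i$ of the while loop of Algorithm~\ref{alg:residual}, and let $\matS_{(b)}$ denote the accumulated ternary approximation of block $b$ at the start of that iteration, so that the layer residual is $\matW^{(j)}-\sum_{b=1}^k\matS_{(b)}$. Since the blocks $\matW^{(j)}_{(b)}$ have pairwise disjoint supports, the block residuals $\matW^{(j)}_{(b)}-\matS_{(b)}$ are mutually orthogonal, and hence, exactly as in the identity $\sum_i(\varepsilon^{(j)}_i)^2=\varepsilon_j^2$ recorded above,
\[
(\delta^{(i-1)})^2\,\FNormS{\matW^{(j)}}=\sum_{b=1}^k\FNormS{\matW^{(j)}_{(b)}-\matS_{(b)}}.
\]
The iteration selects the block $m$ achieving the largest summand, forms $\Delta_m=\matW^{(j)}_{(m)}-\matS_{(m)}$, and replaces $\matS_{(m)}$ by $\matS_{(m)}+\text{Ternary}(\Delta_m)$, leaving every other summand untouched. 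So $\delta^{(i)}<\delta^{(i-1)}$ will follow once we show $\FNormS{\Delta_m-\text{Ternary}(\Delta_m)}<\FNormS{\Delta_m}$.

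Next I would verify $\Delta_m\neq\mathbf 0$. The loop body is executed only when $\delta^{(i-1)}>\varepsilon\ge 0$, so the right-hand side of the display is strictly positive, i.e.\ at least one block residual is nonzero; as block $m$ is chosen to have maximal residual norm, $\Delta_m\neq\mathbf 0$.

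Finally I would invoke the optimality of the ternarization subroutine: writing $\text{Ternary}(\Delta_m)=\alpha^*\hat\matW{}^*$ for the minimizer of (\ref{eqn:twn}) (equivalently of the FGQ sub-problem (\ref{eqn:main})), the closed form (\ref{eqn:symmetric}) gives
\[
\FNormS{\Delta_m-\alpha^*\hat\matW{}^*}=\FNormS{\Delta_m}-\max_{T>0}\frac{\big(\sum_{p\in I_T}\sabs{(\Delta_m)_p}\big)^2}{\sabs{I_T}}.
\]
Taking $T$ just below $\MaxNorm{\Delta_m}$ makes $I_T$ the nonempty set of largest-magnitude coordinates, for which $\big(\sum_{p\in I_T}\sabs{(\Delta_m)_p}\big)^2/\sabs{I_T}=\sabs{I_T}\cdot\MaxNorm{\Delta_m}^2\ge \MaxNorm{\Delta_m}^2>0$. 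Hence the $m$-th summand strictly decreases while all others are fixed; dividing by $\FNormS{\matW^{(j)}}>0$ (a locally optimal weight tensor is not identically zero) yields $\delta^{(i)}<\delta^{(i-1)}$, as claimed.

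The only delicate point is the last step — making precise that one ternarization strictly reduces a nonzero vector's squared norm — and everything else (orthogonal additivity over blocks, nonvanishing of the selected block's residual, the bookkeeping of $\delta$) is routine. If Algorithm~\ref{alg:ternary} is not assumed to solve (\ref{eqn:twn}) exactly, the conclusion still holds provided it returns a solution no worse than the trivial ``keep only the top-magnitude coordinate'' ternarization, which already achieves the strictly positive reduction $\MaxNorm{\Delta_m}^2$; I would state Theorem~\ref{theorem:residual} under whichever of these two hypotheses matches the subroutine as implemented.
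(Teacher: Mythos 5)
Your proof is correct and follows essentially the same route as the paper's: decompose the squared layer error over the disjoint blocks, note that only the selected block $k^*$ is modified, and show that a single ternarization step strictly shrinks that block's residual norm. The only difference is in the last step --- the paper invokes the orthogonality $\alpha^*\hat\matW{}^*\perp(\matR-\alpha^*\hat\matW{}^*)$ and Pythagoras in (\ref{eqn:residual_improve}), whereas you use the closed-form optimal error from (\ref{eqn:symmetric}) and exhibit the explicit reduction $\max_{T}\big(\sum_{p\in I_T}\sabs{(\Delta_m)_p}\big)^2/\sabs{I_T}\ \ge\ \MaxNorm{\Delta_m}^2>0$; since $\FsNorm{\alpha^*\hat\matW{}^*}^2$ equals that maximum these are the same fact, but your version, together with your check that the loop condition forces $\Delta_m\neq\mathbf 0$, is actually more careful about why the inequality is \emph{strict} (a point the paper leaves implicit), and the only blemish is notational: Algorithm \ref{alg:residual} already defines $\delta$ as a squared relative error, so your display should read $\delta^{(i-1)}\FsNorm{\matW^{(j)}}^2=\sum_b\FsNorm{\matW^{(j)}_{(b)}-\matS_{(b)}}^2$ rather than $(\delta^{(i-1)})^2\FsNorm{\matW^{(j)}}^2$, which does not affect the conclusion.
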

\begin{figure}[!t]
\centering
\centering
\begin{algorithm}[H]
\caption{Ternary}\label{alg:ternary}
\begin{algorithmic}[1]
\State \textbf{Input:} Weights $\matW \in \mathbb{R}^n$.
\State Find $\alpha^*$ and $\hat\matW{}^*$ using (\ref{eqn:symmetric}).
\State \textbf{Output:} Ternary vector $\{\alpha^*\hat\matW{}^*\}$.
\end{algorithmic}
\end{algorithm} 
\centering
\begin{algorithm}[H]
\caption{Ternary Residual}\label{alg:residual}
\begin{algorithmic}[1]
\State \textbf{Input:} Weights $\matW$, tolerance $\varepsilon>0$.
\State Partition $\matW$ into $K$ disjoint group of weights $\matW_{(k)}$, for $k=1,...,K$.
\State 
$\alpha^{(1)}_{(k)}\hat\matW{}^{(1)}_{(k)} \leftarrow$ Ternary($\matW_{(k)}$), for $k=1,..., K$.
\State $\Delta=\matW-\sum_k\alpha^{(1)}_{(k)}\hat\matW{}^{(1)}_{(k)}$, 
\quad $\delta = \FsNorm{\Delta}^2/\FsNorm{\matW}^2$.
\State Let $\matV$ be a list of $K$ multi-set of ternary vectors, such that, 
$\matV_{(k)} = \{\alpha^{(t)}_{(k)}\hat\matW{}^{(t)}_{(k)}\}
$, for $t=1,...,\tau_k=\sabs{\matV_{(k)}}$, $k=1,..., K$.
\State Let $E$ be a list of $K$ errors, such that, $E_k = \FsNorm{\matW_{(k)} - \sum_t\alpha^{(t)}_{(k)}\hat\matW{}^{(t)}_{(k)}}$, $t=1,...,\sabs{\matV_{(k)}}$.
\State While $\delta > \varepsilon^2$
\State \quad 
$k^* = \argmax_k \{E_k\}$. 
%
\State \quad 
$\Delta_{k^*}=\matW_{(k^*)} - \sum_{t=1}^{\sabs{\matV_{(k^*)}}}\alpha^{(t)}_{(k^*)}\hat\matW{}^{(t)}_{(k^*)}$. 
\State
\quad 
$\alpha^{(\tau_{k^*}+1)}_{(k^*)}\hat\matW{}^{(\tau_{k^*}+1)}_{(k^*)} \leftarrow$ 
Ternary($\Delta_{k^*}$).
\State \quad
$\matV_{(k^*)} \leftarrow \matV_{(k^*)} \cup \{\alpha^{(\tau_{k^*}+1)}_{(k^*)}\hat\matW{}^{(\tau_{k^*}+1)}_{(k^*)}\}$
\State \quad 
$E_{k^*} \leftarrow \FsNorm{\matW_{k^*} - \sum_{t=1}^{\sabs{\matV_{(k^*)}}}\alpha^{(t)}_{(k^*)}\hat\matW{}^{(t)}_{(k^*)}}$, 
\State \quad 
$\delta \leftarrow \FsNorm{\matW - \sum_k\sum_{t=1}^{\sabs{\matV_{(k^*)}}}\alpha^{(t)}_{(k)}\hat\matW{}^{(t)}_{(k)}}^2/\FsNorm{\matW}^2$.
\State \textbf{Output:} Ternary multi-set of vectors $\matV$.
%
\end{algorithmic}
\end{algorithm} 
\end{figure}

%
It is intuitive that when the  magnitude of a bunch of numbers follow a bi-modal distribution (with one peak is centered close to zero), a ternary representation (one zero and one non-zero magnitude) might approximate the distribution well. In this case, the  scaling factor $\alpha$ is close to the non-zero peak. However, when the magnitude of the numbers represent more than two clusters, ternary representation induces large $\ell_2$ error. We have observed that layer-wise pre-trained weights (magnitudes)  follow exponential, heavy-tailed, or half-Gaussian distributions. Therefore, ternary representation with only one $\alpha$ results in large $\ell_2$ error for weights. Consequently, the large overall  perturbation in the network leads to poor accuracy (as predicted by our theory). 
FGQ blocking is an attempt to approximate such distribution at a finer level with larger number of $\alpha$'s (at the cost of more multiplications). However, the above problem of large $\ell_2$ error resurfaces for FGQ when we ternarize larger blocks. In such case, our proposed residual approach comes to the rescue, 
where we refine the solution by adding a ternary approximation of distribution of element-wise errors. That is, poorly-approximated elements by the earlier ternary representations become more refined. As discussed earlier, FGQ increases the model capacity linearly in number of blocks, while our residual approach improves it exponentially with number of residual steps (Table \ref{table:summary_params}). We can interpret model capacity as an indicator of how many different cluster centers we can represent (thereby how well a mixture of clusters can be approximated). Then, residual approach creates exponentially more cluster centers, and it is intuitive that we can achieve a desired level of $\ell_2$ approximation with only few steps of residual. 
Using ternary residual approach, we essentially approximate an arbitrary distribution with a small sequence of bi-modal distributions. 
%
%
%

One unique property of the residual ternary representation is that we can selectively enable/disable some of the residual weights on-the-fly to upgrade/downgrade the model in response to varying accuracy requirements in a dynamic environment (e.g., autonomous cars, robots, etc.). This is unlike the existing approaches where we may not have such flexibility once we deploy the model, especially on a resource-constraint device. Disabling least important residuals can save a lot of compute while having little impact on the accuracy. 
%
We can interpret the scenario as a `battery-savings mode' on a resource-constraint device. 
Another interesting property of the ternary residual connections/blocks is that they are sparse in nature, and are highly compressible (and suitable for sparse operations). 
Finally, for ease of implementation on a general purpose hardware, the partitioning/blocking of weights are done in a memory contiguous way. That is, we can unroll the weight tensor into a vector, then pick $N$ consecutive element from the vector to form a block of weights. 
As argued by \cite{FGQ17}, $k$-means clustering of weights
might lead to better approximation, but may not be friendly to efficient implementation. 
\\
\\
\textbf{Power-Performance Estimate:} 
Let $X$ be the power-performance gain for ternary 8-2 operations over 8-8. Then, power-performance for residual method with $C\times$ compute using FGQ block size $N$ can be shown as $\frac{X}{C(X/N+1)}$.
\section{Experiments}\label{sec:experiments}
We use pre-trained FP-32 AlexNet and ResNet-101 models for ImageNet classification task.  
%
For 8-bit activation/weight quantization, we have used the low-precision dynamic fixed point technique mentioned in \cite{FGQ17}. We applied Algorithm \ref{alg:residual} to convert the FP-32 models to the ternary residual networks.
As suggested by our theory, 
earlier layers require less perturbation to control the overall error. We set gradually smaller values for layer-wise perturbation $\varepsilon_k$ for earlier layers. We note the total number of ternary blocks (FGQ+residual) required to achieve a given accuracy. Intuitively, we add more ternary compute (proportional to a factor of total blocks) in order to achieve higher accuracy.  
\\
\textbf{Compute-Aware Perturbation:} 
We can set the tolerances $\varepsilon_k$ in a compute-aware manner considering the layer-wise compute distribution to reduce overall compute for ternary residual networks (Figure \ref{fig:compute_epsilon}). 

For power-performance gain, we estimate $X\sim 5.5$ for $N=64$.
We summarize our results in Tables \ref{table:ResNet_accuracy} and \ref{table:AlexNet_accuracy}. For comparison, we mention a few results of other sub-8-bit networks on ImageNet using AlexNet. (1) Binarized weights and activations of \cite{rastegariECCV16} incurred a loss of $\sim 12\%$, (2) the loss of binary weights and 2-bit activations of \cite{zhou2016dorefa} was $\sim 6\%$ from FP-32, and (3) \cite{hubara2016qnn} with binary weight and 2-bit activations reduced the loss to $\sim 5.5\%$ from FP-32, (4) using FGQ with $N=4$ (75\% elimination of multiplications), \cite{FGQ17} achieved $\sim 7.8\%$ loss from FP-32 using ternary weights and 4-bit activations without any low-precision re-training. Note, however, that (1) and (2) used FP-32 weights and activations for the first and last layers. Finally, all these models have different power-performance profile, and a detailed analysis on this is beyond the scope of this paper. 
%
\begin{figure}[t]
\centering
\includegraphics[width=8cm,height=8cm]
{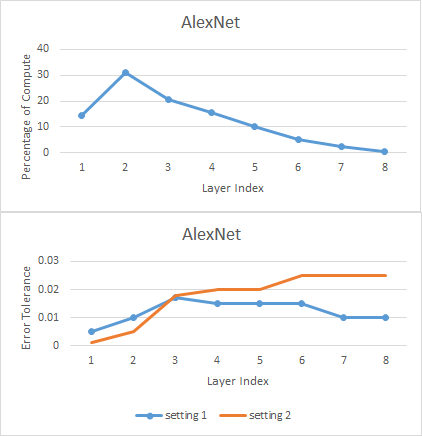}
\caption{Compute-Aware Perturbation: Incurring only $\sim1\%$ loss (from FP32 accuracy) for Ternary Residual AlexNet via compute-aware tolerance selection. Both the settings of $\varepsilon_k^2$'s lead to the same accuracy, however, with different compute profile. In setting 1, slightly larger $\varepsilon_k$'s are used for first two (sensitive) layers to gain in compute; however, to prevent further loss, we set $\varepsilon_k$'s smaller for later layers, while maintaining gain in overall compute comparing to setting 2.
}
\label{fig:compute_epsilon}
\end{figure}
\begin{table}[t]
\centering  
\begin{tabular}{c||c|c|c|c }
    \hline
    \multirow{2}{*}{} &
      \multicolumn{2}{c}{Loss $\sim 1\%$} &
      \multicolumn{2}{c}{Loss $\sim 2\%$} 
\\
    & \# Blocks & comp & \# Blocks & comp 
    \\
    \hline\hline
    $N=64$ & $2.3\times$ & $2.5\times$
    & $2\times$ & $2.2\times$
    \\
    \hline
    $N=32$ & $2.1\times$ & $2.3\times$ 
    & $1.7\times$ & $2\times$
    \\
    \hline
  \end{tabular}
\caption{Results for Ternary Residual ResNet-101 via Algorithm \ref{alg:residual} using FGQ blocks ($N$ being block size). Here we assume that the total number of blocks (and compute) for FGQ ternary (without residual) is $1\times$. Loss is w.r.t FP32.}
 \label{table:ResNet_accuracy}
\end{table}
\begin{table}[t]
\centering
  \begin{tabular}{c||c|c|c|c }
    \hline
    \multirow{2}{*}{} &
      \multicolumn{2}{c}{Loss $\sim 1\%$} &
      \multicolumn{2}{c}{Loss $\sim 2\%$} 
\\
    & \# Blocks & comp & \# Blocks & comp
    \\
    \hline\hline
    $N=64$ & $2.9\times$ & $2.9\times$
    & $2.9\times$ & $2.5\times$
    \\
    \hline
    $N=32$ & $2.6\times$ & $2.5\times$ 
    & $2.6\times$ & $2.2\times$
    \\
    \hline
  \end{tabular}
\caption{Results for Ternary Residual AlexNet via Algorithm \ref{alg:residual} using FGQ blocks ($N$ being block size). Here we assume that the total number of blocks (and compute) for FGQ ternary (without residual) is $1\times$. Loss is w.r.t FP32.}
 \label{table:AlexNet_accuracy}
\end{table}

\section{Discussion}

In order to achieve extremely high accuracy for sub-8-bit DNNs, we introduce the notion of residual inference, where we add more sub-8-bit edges to sensitive branches of the sub-8-bit network. Such addition of residual edges is guided by a perturbation theory proposed here for pre-trained DNNs. We show that ternary 8-2 models, enhanced by such ternary residual edges, can outperform 8-8 networks in terms of model size, number of multiplications, inference power-performance, while maintaining similar accuracy. A unique feature of residual enhancement is that we can upgrade/downgrade the model on the fly, depending on the varying accuracy requirements in a dynamic environment, by enabling/disabling selected branches. Moreover, 
the ternary residual network can be formed from FP-32 counterpart in a resource-constrained environment without (re)training.
%
Although we presented the residual idea only for one type of sub-8-bit representation, e.g., ternary 8-2, it is general enough to be applied for other low-precision representations as well (for both weight and activations). A future work is to combine our residual approach with low-precision (re)training in a theoretically consistent manner to improve the power-performance numbers.
\bibliographystyle{plain}
\bibliography{references_new}
\newpage
\section{Appendix}
\subsection{Proofs}
\subsubsection{Proof of Theorem \ref{thm:layer_bound}}\label{proof:thm_layer_bound}
We first bound the relative change in output in presence of perturbed input and perturbed parameters using Lemma \ref{lemma:general_bound}.
\begin{lemma}\label{lemma:general_bound}
Let $\matX^{(i-1)}$ be an input to $f_i \in f_{dnn}$ with pre-trained parameter $\matW^{(i)}$, and let $\matX^{(i)}$ be the output. Then, for perturbed input $\tilde\matX{}^{(i-1)}$, perturbed activations $\hat\matX{}^{(i-1)}$, perturbed parameter $\tilde\matW{}^{(i)}$, and perturbed output $\tilde\matX{}^{(i)}$, we derive, for constants $c_i>0$,
\begin{eqnarray}\label{eqn:general_bound}
\frac{\FsNorm{\matX^{(i)} - \tilde \matX{}^{(i)}}}{\FsNorm{\matX^{(i)}}}
\leq 
c_i\left(\frac{\FsNorm{\matX^{(i-1)}- \tilde\matX{}^{(i-1)}}}{\FsNorm{\matX^{(i)}}}
+
\frac{\FsNorm{\tilde\matX{}^{(i-1)}- \hat\matX{}^{(i-1)}}}{\FsNorm{\matX^{(i)}}}
+
\frac{\FsNorm{\hat\matX{}^{(i-1)}}}{\FsNorm{\matX^{(i)}}}\frac{\FsNorm{\matW^{(i)}- \tilde\matW{}^{(i)}}}{\FsNorm{\matW^{(i)}}}
\right)
\end{eqnarray}
For non-parametric functions the last term in (\ref{eqn:general_bound}) is zero.
\end{lemma}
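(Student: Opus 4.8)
The plan is to reduce the claim to a single-step perturbation estimate for each of the four function families in $f_{dnn}$ (convolution with batch normalization, matrix multiplication, ReLU, pooling), and in each case to split the total error $\matX^{(i)} - \tilde\matX{}^{(i)}$ through two intermediate quantities so that a triangle inequality produces exactly the three terms on the right-hand side of \eqref{eqn:general_bound}. First I would write
\begin{eqnarray*}
\FsNorm{\matX^{(i)} - \tilde\matX{}^{(i)}}
&=& \FsNorm{f_i(\matX^{(i-1)};\matW^{(i)}) - f_i(\hat\matX{}^{(i-1)};\tilde\matW{}^{(i)})}
\\
&\leq& \FsNorm{f_i(\matX^{(i-1)};\matW^{(i)}) - f_i(\tilde\matX{}^{(i-1)};\matW^{(i)})}
\\
&& + \FsNorm{f_i(\tilde\matX{}^{(i-1)};\matW^{(i)}) - f_i(\hat\matX{}^{(i-1)};\matW^{(i)})}
\\
&& + \FsNorm{f_i(\hat\matX{}^{(i-1)};\matW^{(i)}) - f_i(\hat\matX{}^{(i-1)};\tilde\matW{}^{(i)})},
\end{eqnarray*}
then bound each of the three pieces by, respectively, $c_i\FsNorm{\matX^{(i-1)} - \tilde\matX{}^{(i-1)}}$, $c_i\FsNorm{\tilde\matX{}^{(i-1)} - \hat\matX{}^{(i-1)}}$, and $c_i\FsNorm{\hat\matX{}^{(i-1)}}\,\FsNorm{\matW^{(i)} - \tilde\matW{}^{(i)}}/\FsNorm{\matW^{(i)}}$; dividing through by $\FsNorm{\matX^{(i)}}$ gives \eqref{eqn:general_bound}.

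The per-family bounds are then routine. For matrix multiplication, $f_i(\matX;\matW) = \matW\matX$, and the first two pieces are handled by $\FsNorm{\matW(\matA-\matB)} \leq \TsNorm{\matW}\FsNorm{\matA-\matB}$ while the third is $\FsNorm{(\matW-\tilde\matW)\hat\matX} \leq \FsNorm{\matW-\tilde\matW}\,\TsNorm{\hat\matX}$ (using the submultiplicativity facts quoted in the Preliminaries), which gives the stated form after writing $\FsNorm{\matW-\tilde\matW} = \FsNorm{\matW}\cdot\FsNorm{\matW-\tilde\matW}/\FsNorm{\matW}$ and absorbing $\TsNorm{\matW}$, $\TsNorm{\hat\matX}/\FsNorm{\hat\matX}$ into $c_i$. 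Convolution with batch normalization is linear in both $\matX$ and the (folded) parameters, so it reduces to the same argument applied to the Toeplitz/im2col matrix form, with the BN scale-and-shift being a further affine map whose Lipschitz constants get swept into $c_i$. ReLU and pooling (max or average) are parameter-free $1$-Lipschitz maps, so their first piece is bounded with constant $1$, their second piece likewise, and the third piece vanishes — this is the ``last term is zero for non-parametric functions'' remark. Throughout one uses the standing assumption (invoked just after Theorem~\ref{thm:layer_bound}) that $\FsNorm{\matX^{(i)}}$ and $\FsNorm{\matW^{(i)}}$ are bounded away from zero, so that dividing by them is legitimate and the constants $c_i$ depend only on the pre-trained network.

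The main obstacle is the convolution-with-batch-normalization case: one must be careful that ``ternarizing the weights'' and ``perturbing the BN-folded effective kernel'' are compatible, i.e.\ that the affine BN transformation does not blow up the relative weight perturbation $\varepsilon_i$ in an uncontrolled way, and that the operator-norm constants of the im2col matrix are genuinely data-independent (they depend on kernel size, stride, and padding, which are fixed). A secondary subtlety is that for ReLU the inequality $\FsNorm{\mathrm{ReLU}(\matA) - \mathrm{ReLU}(\matB)} \leq \FsNorm{\matA - \matB}$ is tight only elementwise, so one should verify it componentwise before summing — but this is standard. Once these are checked, the lemma follows by assembling the four cases and noting that the displayed inequality is stated uniformly with a single family of constants $c_i$.
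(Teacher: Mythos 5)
Your proposal is correct and follows essentially the same route as the paper: the same three-way triangle-inequality split through $f_i(\tilde\matX{}^{(i-1)};\matW^{(i)})$ and $f_i(\hat\matX{}^{(i-1)};\matW^{(i)})$, Cauchy--Schwarz/linearity for the parametric layers, $1$-Lipschitz arguments for ReLU and pooling, and absorption of the fixed quantities $\TsNorm{\matW^{(i)}}$, $\FsNorm{\matW^{(i)}}$ into $c_i$. The only cosmetic difference is that the paper works elementwise on output-feature-map entries and squares before summing, and additionally invokes a lower bound $\FsNorm{\matX^{(i)}}\geq c\,\FsNorm{\matX^{(i-1)}}\FsNorm{\matW^{(i)}}$ to pass to the input-normalized form used later in the recursion, which your final division by $\FsNorm{\matX^{(i)}}$ also correctly sidesteps for the lemma as literally stated.
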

Let
$$
\Delta_i =  \frac{\FsNorm{\matX^{(i)}- \tilde\matX{}^{(i)}}}{\FsNorm{\matX^{(i)}}}, 
\quad
\gamma_i =  \frac{\FsNorm{\tilde\matX{}^{(i)}- \hat\matX{}^{(i)}}}{\FsNorm{\matX^{(i)}}}, 
\quad
\varepsilon_i =  \frac{\FsNorm{\matW^{(i)}- \tilde\matW{}^{(i)}}}{\FsNorm{\matW^{(i)}}}.
$$
We can write, for some constant $0< c_1\leq 1$,
\begin{eqnarray}\label{eqn:triangle}
\FsNorm{\hat\matX{}^{(i-1)}} 
\leq
c_1\cdot\FsNorm{\matX^{(i-1)}} 
+ 
c_1\cdot\FsNorm{\matX^{(i-1)}-\tilde\matX{}^{(i-1)}}
+
c_1\cdot\FsNorm{\tilde\matX{}^{(i-1)}-\hat\matX{}^{(i-1)}}.
\end{eqnarray}
Then, combining (\ref{eqn:general_bound}) and (\ref{eqn:triangle}), we have the following recursive relation.
\begin{eqnarray*}
\Delta_i 
\leq 
c_i\cdot \Delta_{i-1} 
+
c_i\cdot \gamma_{i-1} 
+
c_i c_1(1+\gamma_{i-1}+\Delta_{i-1})\varepsilon_i
=
c_i (1+c_1\varepsilon_i)\Delta_{i-1} 
+ 
c_i\cdot \gamma_{i-1} 
+
c_i c_1(1+\gamma_{i-1})\varepsilon_i
\end{eqnarray*}
Simplifying the constants, we derive
\begin{eqnarray}\label{eqn:recursion}
\Delta_i 
\leq 
O(1+\varepsilon_i)\Delta_{i-1} 
+ 
O(\gamma_{i-1})
+
O(1+\gamma_{i-1})O(\varepsilon_i)
\end{eqnarray}
Expanding the recursion in (\ref{eqn:recursion}) we get the following. 
\begin{eqnarray*}
\Delta_i
&\leq&
\left(\prod_{k=1}^iO(1+\varepsilon_k) \right)\Delta_0
+
\sum_{k=1}^i\left(\prod_{j=k+1\leq i}^iO(1+\varepsilon_j)\right)(O(\gamma_{k-1})
+
O(1+\gamma_{k-1})O(\varepsilon_k))
\end{eqnarray*}
\subsubsection{Proof of Theorem \ref{cor:small_perturbation}}\label{sec:proof_corollary1}
In Lemma \ref{lemma:general_bound}, under small perturbation, we assume that $\FsNorm{\hat\matX{}^{(i-1)}}\leq d_{i-1}\cdot\FsNorm{\matX^{(i-1)}}$, for some constant $d_{i-1}>0$.
Then, simplifying the constants, from (\ref{eqn:general_bound}) we derive
\begin{eqnarray*}
\Delta_i
&\leq&
\left(\prod_{k=1}^ic_k\right)\Delta_0
+
\sum_{k=1}^i\left(\prod_{j=k+1\leq i}^ic_j\right)(O(\gamma_{k-1})
+
O(\varepsilon_k))
\end{eqnarray*}
For no input domain perturbation, we set $\Delta_0=0$ to derive the result.
\subsubsection{Proof of Lemma \ref{lemma:general_bound}}
We follow some notational convention. Let $\x$ denote original input, $\tilde\x$ denote the input perturbed due to perturbation of earlier layers, $\hat\x$ denote perturbed $\tilde\x$ in the current layer (say because of low precision quantization) before applying it to the layer function. Perturbed weights are denoted by $\tilde\matW$. \\
\\
$\bullet$ \textbf{Conv+BN+Scaling (Parametric)}:
\\
Convolution essentially performs an inner product between an image patch and an  weight filter. 
The $i$-th element of $k$-th outpur feature map (ofm) can be expressed as 
$$
y_i^{(k)} = \left<\matX^{(i)},\w^{(k)}\right>
$$
where $\matX^{(i)}$ is the $i$-th input patch where $k$-th ofm convolution filter bank is applied.
Also, let batch normalization (BN) and scaling parameters for $k$th ofm are $\mu_k$, $\sigma_k$, and $\alpha_k$, $\beta_k$ respectively, such that the combined output of convolution, BN, and Scaling can be expressed as
$$
z_i^{(k)} = \frac{y_i^{(k)}-\mu_k}{\sigma_k}\alpha_k + \beta_k
=\frac{\alpha_k}{\sigma_k}\cdot y_i^{(k)} + \left(\beta_k - \frac{\mu_k\cdot\alpha_k}{\sigma_k}\right)
= a_k \cdot y_i^{(k)}+b_k,
$$
where $a_k$ and $b_k$ are learned (locally optimal) parameters. 
That is, we have a linear expression
$$
z_i^{(k)} = a_k\cdot \left<\matX^{(i)},\w^{(k)}\right>+ b_k
=
\left<\matX^{(i)},a_k\cdot\w^{(k)}\right>+ b_k
= \left<\left(\matX^{(i)},1\right),\left(a_k\cdot\w^{(k)}, b_k\right)\right>
=
\left<\bar\matX{}^{(i)},\bar\w{}^{(k)} \right>
$$
Let the perturbed output be 
$$
\tilde z_i{}^{(k)} = \left<\hat \matX{}^{(i)},\tilde\w{}^{(k)} \right>
$$
where $\tilde \matX{}^{(i)}$, $\hat \matX{}^{(i)}$ and $\tilde\w{}^{(k)}$ are perturbed input patch, perturbed activation patch, and perturbed parameter.
Then, for some constant $c_1>0$, change in output can be bounded as follows.
\begin{eqnarray*}
\sabs{z_i{}^{(k)} - \tilde z_i{}^{(k)}}^2
&\leq&
c_1^2(\sabs{\left<\bar\matX{}^{(i)},\bar\w{}^{(k)} \right> - \left<\tilde\matX{}^{(i)},\bar\w{}^{(k)} \right>}^2
+
\sabs{\left<\tilde\matX{}^{(i)},\bar\w{}^{(k)} \right> - \left<\hat\matX{}^{(i)},\bar\w{}^{(k)} \right>}^2
\\
&& +
\sabs{\left<\hat\matX{}^{(i)},\bar\w{}^{(k)} \right> - \left<\hat\matX{}^{(i)},\tilde\w{}^{(k)} \right>}^2)
\\
&\leq&
c_1^2\cdot\FsNorm{\bar\matX{}^{(i)} - \tilde\matX{}^{(i)}}^2\FsNorm{\bar\w{}^{(k)}}^2
+c_1^2\cdot\FsNorm{\tilde\matX{}^{(i)} - \hat\matX{}^{(i)}}^2\FsNorm{\bar\w{}^{(k)}}^2
\\
&& +
c_1^2\cdot\FsNorm{\hat\matX{}^{(i)}}^2\FsNorm{\bar\w{}^{(k)} - \tilde\w{}^{(k)}}^2
\end{eqnarray*}
Considering all the elements, for some constant $c_u>0$, we derive
\begin{eqnarray*}
\FsNorm{\z - \tilde \z}^2
&\leq&
\sum_i\sum_k \sabs{z_i{}^{(k)} - \hat z_i{}^{(k)}}^2
\\
&\leq&
c_u^2\cdot\FsNorm{\bar\matX- \tilde\matX}^2\FsNorm{\bar\matW}^2
+
c_u^2\cdot\FsNorm{\tilde\matX- \hat\matX}^2\FsNorm{\bar\matW}^2
+
c_u^2\cdot\FsNorm{\hat\matX}^2\FsNorm{\bar\matW- \tilde\matW}^2
\end{eqnarray*}
From above we have,
\begin{eqnarray}\label{eqn:conv_upper_bound}
\FsNorm{\z - \tilde \z}
&\leq&
c_u \cdot (\FsNorm{\bar\matX- \tilde\matX}\FsNorm{\bar\matW}
+
\FsNorm{\tilde\matX- \hat\matX}\FsNorm{\bar\matW}
+
\FsNorm{\hat\matX}^2\FsNorm{\bar\matW- \tilde\matW})
\end{eqnarray}
Note that
$$
0 \leq \sabs{z_i^{(k)}}  = \sabs{\left<\bar\matX{}^{(i)},\bar\w{}^{(k)} \right>}
\leq c_1\cdot \FsNorm{\bar\matX{}^{(i)}}\FsNorm{\bar\w{}^{(k)}} ,
$$
and $\sabs{z_i^{(k)}}$ is close to zero when the $k$th ofm filters are orthogonal to input image patch. 
For pre-trained, locally optimal weights we expect that not all the elements of an ofm are  close to zero. For simplicity of analysis we assume that, for some constant $c^{(k)}>0$
$$
\FsNorm{\z^{(k)}}^2
=\sum_i \sabs{z_i^{(k)}}^2 
\geq
(c^{(k)})^2\FsNorm{\bar\matX}^2\FsNorm{\bar\w{}^{(k)}}^2
$$
and, for $c_{min} = \min\{c^{(k)}\}$
\begin{eqnarray}\label{eqn:conv_lower_bound}
\FsNorm{\z}^2
=
\sum_k\FsNorm{\z^{(k)}}^2
\geq
(c_{min})^2
\sum_k\FsNorm{\bar\matX}^2\FsNorm{\bar\w{}^{(k)}}^2
\geq
(c_{min})^2\FsNorm{\bar\matX}^2\FsNorm{\bar\matW}^2
\end{eqnarray}
Combining (\ref{eqn:conv_upper_bound}) and (\ref{eqn:conv_lower_bound}) we have, for some constant $c>0$,
$$
\frac{\FsNorm{\z - \tilde \z}}{\FsNorm{\z}}
\leq 
c\cdot \frac{\FsNorm{\bar\matX- \tilde\matX}}{\FsNorm{\bar\matX}}
+
c\cdot \frac{\FsNorm{\tilde\matX- \hat\matX}}{\FsNorm{\bar\matX}}
+
c\cdot\frac{\FsNorm{\hat\matX}}{\FsNorm{\bar\matX}}\frac{\FsNorm{\bar\matW- \hat\matW}}{\FsNorm{\bar\matW}}
$$
\\
$\bullet$ \textbf{Matrix Multiplication (Parametric)}:
\\
For an $m$ dimensional input $\x$ and $d\times m$ weight matrix $\matW$ along with $d$ dimensional bias $\b$, where $d$ is the number of classes, output $\y$ can be written as a linear operation, 
$$
\y = \matW\x+\b = \bar\matW\bar\x
$$
Similarly, perturbed output is 
$
\tilde\y = \tilde\matW\hat\x.
$
Note that, for some constant $0\leq c_u\leq 1$
$$
\FsNorm{\y} \leq c_u\cdot \FsNorm{\bar\matW}\FsNorm{\bar\x}
$$
Then, we can derive 
$$
\FsNorm{\y-\tilde\y}
\leq
c_u\cdot\FsNorm{\bar\x-\tilde\x}\FsNorm{\bar\matW}
+
c_u\cdot\FsNorm{\tilde\x-\hat\x}\FsNorm{\bar\matW}
+
c_u\cdot\FsNorm{\hat\x}\FsNorm{\bar\matW-\tilde\matW}
$$
For locally optimal pre-trained weights the output is expected to be not in a neighborhood of zero. Thus, we make the following assumption, for some constant $c_0>0$,
$$
c_0 \cdot\FsNorm{\bar\matW}\FsNorm{\bar\x}\leq \FsNorm{\y}
$$
Combining the above results, for some constant $c>0$, we get
$$
\frac{\FsNorm{\y-\tilde\y}}{\FsNorm{\y}}
\leq
c\cdot \frac{\FsNorm{\bar\x-\tilde\x}}{\FsNorm{\bar\x}} 
+ 
c\cdot \frac{\FsNorm{\tilde\x-\hat\x}}{\FsNorm{\bar\x}}
+
c\cdot \frac{\FsNorm{\hat\x}}{\FsNorm{\bar\x}}\frac{\FsNorm{\bar\matW-\tilde\matW}}{\FsNorm{\bar\matW}}
$$
$\bullet$ \textbf{Pooling (Non-Parametric)}:
\\
We can interpret pooling as performing a dot product between input patch and an $s\times r$ filter that contains constant entries. For example, for max pooling the filter has exactly one non-zero (value is 1) corresponding to the maximum element of the input patch, and others are zeros; and for average pooling all the entries of the filter are $\frac{1}{s\cdot r}$. We can create multiple such filters along input feature map (ifm) dimensions (for max pooling the position of non-zero entry may change). Denoting this `fake' bunch of filters as $\matW$, we have
$$
\FsNorm{\matW} = c_0
$$
where $c_0>0$ is a constant that depends only on the dimensions of input and pooling filter. The remaining operation is similar to convolution barring the accumulation of numbers across filters. 
We now bound the output perturbation for pooling. Let $\x$ denote the input patch on which pooling has been applied.
\\\\
$\bullet$ \textit{Max Pooling}: 
We bound the change in output of Max Pooling when it is applied on a distorted feature map input $\hat\x$.
Let $i^*=\argmax_i\{\x_i\}$,  $k^*=\argmax_k\{\tilde\x_i\}$, and $j^*=\argmax_j\{\hat\x_j\}$. Then, 
$$
y = \text{Max-Pooling}(\x) = \x_{i^*},
\quad
\tilde y = \text{Max-Pooling}(\hat\x) = \hat\x_{j^*}.
$$
\begin{eqnarray*}
\abs{y - \tilde y} 
&\leq& 
\sabs{\x_{i^*} - \tilde\x_{k^*}}
+
\sabs{\tilde\x_{k^*} - \hat\x_{j^*}}
\leq
\max_i\sabs{\x_i - \tilde\x_i}
+
\max_i\sabs{\tilde\x_i - \hat\x_i}
\\
&=& 
\MaxNorm{\x-\hat\x} + \MaxNorm{\tilde\x-\hat\x}
\leq 
c\cdot(\FsNorm{\x-\tilde\x} + \FsNorm{\tilde\x-\hat\x})
\end{eqnarray*}
for some constant  $0<c\leq 1$.
\\\\
$\bullet$ \textit{Mean Pooling}: 
We have similar analysis for mean pooling, where the input patch $\x$ is mapped to its mean.
$$
y = \text{Mean-Pooling}(\x) = \frac{1}{n}\sum_i \x_i.
$$
We bound the change in output of Mean Pooling when it is applied on a distorted feature map input $\hat\x$.
\begin{eqnarray*}
\abs{y-\tilde y} 
&\leq& 
\sabs{\sum_i\x_i - \sum_k\tilde\x_k}/n
+
\sabs{\sum_k\tilde\x_k - \sum_j\hat\x_j}/n
\\
&\leq&
\sum_i\sabs{\x_i - \tilde\x_i}/n 
+
\sum_i\sabs{\tilde\x_i - \hat\x_i}/n 
\\
&=& 
\OsNorm{\x - \tilde\x}/n
+ \OsNorm{\tilde\x - \hat\x}/n
\leq 
\frac{1}{\sqrt{n}}(\FsNorm{\x - \tilde\x}+\FsNorm{\tilde\x - \hat\x})
\end{eqnarray*}
Considering the entire feature map we have, for some constant $c_1>0$ that depends only on the dimensions of ifm and pooling region,
$$
\FsNorm{\matY - \tilde\matY}
\leq
c_1\cdot (\FsNorm{\matX - \tilde\matX}+\FsNorm{\tilde\matX - \hat\matX})
$$

Going by the same argument on pre-trained model that output of pooling layer cannot be close to  zero, we assume that output tensor $\matY$ satisfies,
$$
\FsNorm{\matY} 
\geq c_0\cdot \FsNorm{\matX}
$$
where $c_0>0$ is a universal constant.
Combining the above two
$$
\FsNorm{\matY - \tilde\matY}/\FsNorm{\matY}
\leq c\cdot (\FsNorm{\matX - \tilde\matX} + \FsNorm{\tilde\matX - \hat\matX})/\FsNorm{\matX}
$$

\noindent
$\bullet$  \textbf{ReLU (Non-Parametric)}:
\\
ReLU on input number $x$ is defined as
\quad 
$ReLU(x) := \max\{0,x\}$.
\\
Output of ReLU (element-wise) for input feature map $\matX$ is  
$\matY = ReLU(\matX)$. 
We bound the change in output of ReLU for a perturbed input $\hat x$.
Let 
$$
h = \abs{\text{ReLU}(x) - \text{ReLU}(\hat x)} = \abs{\max\{0,x\} - \max\{0,\hat x\}}.
$$
We consider the following cases. 
\begin{eqnarray*}
\text{Case I}: \quad h &=& \abs{x - \hat x}, \text{for } x, \hat x \geq 0,\\
\text{Case II}: \quad h &=& \abs{x} < \abs{x-\hat x}, \text{for } x \geq 0, \hat x < 0,\\
\text{Case III}: \quad h &=& \abs{\hat x} < \abs{x-\hat x}, \text{for } x < 0, \hat x \geq 0,\\
\text{Case IV}: \quad h &=& 0 < \abs{x-\hat x}, \text{for } x < 0, \hat x < 0.
\end{eqnarray*}
The last case suggests if $x$ is negative and we perturb it to some arbitrary negative value, such change has no effect on the outcome. 

Also, note that ReLU is a Lipschitz continuous function, and the Lipschitz constants for the four cases are $1, <1, <1, 0$, respectively. Assuming all the cases are equally likely the expected value of the Lipschitz constant $c$ is $c < 3/4 = 0.75$. 
Therefore, we expect ReLU to work as a noise dampener. 

In presence of input perturbation $\tilde x$ and activation perturbation $\hat x$, we can derive, for $0<c_1<1$, 
$$
h^2 
\leq 
(\text{ReLU}(x)-\text{ReLU}(\tilde x))^2 
+ 
(\text{ReLU}(\tilde x)-\text{ReLU}(\hat x))^2
\leq
c_1^2(x-\tilde x)^2 + c_1^2(\tilde x - \hat x)^2
$$
For the entire feature map, we can write 
$$
\FsNorm{\matY - \hat\matY}
=\sqrt{\sum_i(ReLU(\matX_i)-ReLU(\hat\matX_i))^2}
\leq 
c_1\cdot 
(\FsNorm{\matX-\tilde\matX}
+ \FsNorm{\tilde\matX-\hat\matX}).
$$
Note that, 
$
\FsNorm{\matY} \leq \FsNorm{\matX}.
$
Also, we expect that the output of ReLU, for a locally optimal pre-trained network, would not be in the neighborhood of zero. Thus, we assume, for some constant $c>0$,
$$
c\cdot \FsNorm{\matX}\leq \FsNorm{\matY}
$$
For example, if all the entries of $\matX$ follow a symmetric distribution w.r.t zero, $c = 1/\sqrt{2}$.
Combining the above inequalities we derive the desired expression.

\subsubsection{Proof of Theorem \ref{theorem:residual}}
\label{sec:proof_residual}
\begin{proof}
Let $\alpha^*\hat\matW{}^*\leftarrow \text{Ternary}(\matR)$ be a ternary representation of $\matR$ using Algorithm \ref{alg:ternary}. Then, we have
$$
\alpha^* 
= 
{\left<\matR,\hat\matW{}^*\right>}/{\FsNorm{\hat\matW{}^*}^2},
\quad
\text{ and }
\quad
\alpha^*\hat\matW{}^* 
= 
\left<\matR,\hat\matW{}^*/{\FsNorm{\hat\matW{}^*}}\right>\cdot
{\hat\matW{}^*} /{\FsNorm{\hat\matW{}^*}}.
$$
By construction, the following orthogonality holds: 
$
\alpha^*\hat\matW{}^*\perp 
(\matR- \alpha^*\hat\matW{}^*).
$
It follows that
\begin{eqnarray}\label{eqn:residual_improve}
\FsNorm{\matR}^2 
=
\FsNorm{\alpha^*\hat\matW{}^*}^2
+\FsNorm{\matR- \alpha^*\hat\matW{}^*}^2
>
\FsNorm{\matR- \alpha^*\hat\matW{}^*}^2
\end{eqnarray}
That is, subtracting the ternary vector $\alpha^*\hat\matW{}^*$ from $\matR$ reduces the $\ell_2$ error. In our set up, we interpret $\matR$ as the residual error produced by the earlier ternary representations, and adding the new ternary $\alpha^*\hat\matW{}^*$ to the solution set strictly reduces the $\ell_2$ error. 

In Algorithm \ref{alg:residual} only one block is getting updated every iteration via residual ternary. So, it is sufficient to show that the $\ell_2$ error for this block gets reduced. Let the index for this block be $k^*$.
\begin{eqnarray*}
\delta^{(i-1)} &=&
\FsNorm{\matW - \sum_k\sum_t\alpha^{(t)}_{(k)}\hat\matW{}^{(t)}_{(k)}}^2
\\
&=&
\FsNorm{\sum_k\matW_{(k)} - \sum_k\sum_t\alpha^{(t)}_{(k)}\hat\matW{}^{(t)}_{(k)}}^2
\\
&=&
\sum_k\FsNorm{\matW_{(k)} - \sum_t\alpha^{(t)}_{(k)}\hat\matW{}^{(t)}_{(k)}}^2
\\
&=&
\sum_{k\neq k^*}\FsNorm{\matW_{(k)} - \sum_t\alpha^{(t)}_{(k)}\hat\matW{}^{(t)}_{(k)}}^2
+
\FsNorm{\Delta_{k^*}}^2
\\
&>&
\sum_{k\neq k^*}\FsNorm{\matW_{(k)} - \sum_t\alpha^{(t)}_{(k)}\hat\matW{}^{(t)}_{(k)}}^2
+
\FsNorm{\Delta_{k^*}-\alpha^{(\tau_{k^*}+1)}_{(k^*)}\hat\matW{}^{(\tau_{k^*}+1)}_{(k^*)}
}^2, \quad \text{from } (\ref{eqn:residual_improve})
\\
&=&
\FsNorm{\sum_{k}\matW_{(k)} - \sum_{k}\sum_t\alpha^{(t)}_{(k)}\hat\matW{}^{(t)}_{(k)}
-\alpha^{(\tau_{k^*}+1)}_{(k^*)}\hat\matW{}^{(\tau_{k^*}+1)}_{(k^*)}
}^2
\\
&=&
\FsNorm{\matW - \sum_{k}\sum_t\alpha^{(t)}_{(k)}\hat\matW{}^{(t)}_{(k)}
-\alpha^{(\tau_{k^*}+1)}_{(k^*)}\hat\matW{}^{(\tau_{k^*}+1)}_{(k^*)}
}^2
\\
&=&
\delta^{(i)}
\end{eqnarray*}
Above we use the orthogonality of FGQ blocks and (\ref{eqn:residual_improve}).
\end{proof}
\subsection{Discussion on Throughput}
$\bullet$ \textit{Compute Assessment: }
Let the cost for 8-8 ops and 8-2 ops be $C_8$ and $C_2$, respectively. For 8-8 representation, let the total number of 8-8 ops be $T$. Using FGQ ternary (no residual) with block size $N$ we essentially replace a group of $N$ 8-8 ops by $N$ 8-2 ops and 1 8-8 ops. 
Then, the total compute cost for 8-8 is $C_8\cdot T$, and that for FGQ ternary is $C_8\cdot\frac{T}{N}+C_2\cdot T$. For $r$ step residual ternary, we use $r$ additional FGQ ternary blocks, incurring a total cost $(r+1)\cdot (C_8\cdot\frac{T}{N}+C_2\cdot T)$. Therefore, gain in compute for $r$-step residual ternary over 8-8 representation is
$$
\pi_c(N,r)
=
\frac{C_8\cdot T}{(r+1)\cdot (C_8\cdot\frac{T}{N}+C_2\cdot T)}
$$
Assuming $C_8=c\cdot C_2$, for some $c>1$, we have
$$
\pi_c(N,r)
=
\frac{c}{(r+1)\cdot (\frac{c}{N}+1)}
$$
$\bullet$ \textit{Memory Bandwidth Assessment: } 
We assume that we can have one input and one output buffer (and an additional buffer for ResNet type networks), and we can use them interchangeably, i.e., output buffer of earlier layer can be used as input buffer to the next layer.   
Also, we assume that the weights are streamed from memory because the model  size often exceeds on-chip memory for most of the devices (e.g., SKX-CPU, TPU). Therefore, for bandwidth bound case, gain in $r$-step residual over 8-8 representation is 
$$
\pi_m{N, r}
\approx 
\frac{4}{(r+1)\cdot (\frac{1}{N}+1)}.
$$
For $r$-step residual using block size $N$, the gain over 8-8 is $\pi_c(N,r)$ and $\pi_m(N,r)$ for  compute bound case and memory bandwidth bound case, respectively.
For $N=64$, assuming $c\approx 5$, and for $r$-step residual ternary (e.g., $r+1=2.4$, using $2.4\times$ FGQ ternary blocks) $\pi_c(N,r) \approx 2$, and $\pi_m(N,r) \approx 1.6$.

\end{document}